\documentclass[12pt]{article}
\usepackage{latexsym}
\usepackage{epsfig}

\usepackage{amsmath}
\usepackage{amsfonts}
\usepackage{amsthm}
\usepackage{amssymb}
\usepackage{url}
\usepackage{hyperref}

\newcommand{\ket}[1]{\left | \, #1 \right \rangle}

\newtheorem{lemma}{Lemma}
\newtheorem{algorithm}{Algorithm}
\newtheorem{corollary}{Corollary}

\title{Interactive proofs with efficient quantum prover for recursive Fourier sampling}
\author{Matthew McKague \\ Centre for Quantum Technologies \\ National University of Singapore \\Ê\url{matthew.mckague@nus.edu.sg}}

\begin{document}

\maketitle

\begin{abstract}
We consider the recursive Fourier sampling problem (RFS), and show that there exists an interactive proof for RFS with an efficient classical verifier and efficient quantum prover.  

\end{abstract}

\section{Introduction}
The Recursive Fourier Sampling (RFS) problem is an oracle based decision problem that was proposed by Bernstein and Vazirani in \cite{Bernstein:1997:Quantum-Complex}.  Historically, RFS was the first problem that showed a relativized separation between $\mathsf{BQP}$ and $\mathsf{P}$.

Along with their definition of RFS, Bernstein and Vazirani also proved a lower bound for the classical query complexity and an upper bound for the quantum query complexity, establishing the $\mathsf{P}$ versus relativized $\mathsf{BQP}$ separation.  Later Aaronson \cite{Aaronson:2003:Quantum-Lower-B} proved matching lower bounds for the quantum query complexity of RFS.  Johnson \cite{Johnson:2008:Upper-and-Lower} extended and improved these results, including bounds on the polynomial degree of RFS.  Hallgren and Harrow \cite{Hallgren:2008:Superpolynomial} subsequently generalized the recursive structure of the problem to allow a super-polynomial speedup, relative to an oracle, based on randomly generated quantum circuits.

Bernstein and Vazirani also showed that RFS is not in $\mathsf{NP}$.  Aaronson  \cite{Aaronson:2003:Quantum-Lower-B} claims that the proof can be extended to show that RFS is not in $\mathsf{MA}$, and it has been conjectured by Vazirani and others that it is not even in $\mathsf{PH}$.  This gives an oracle relative to which $\mathsf{BQP}$ is not in $\mathsf{MA}$, and hope that a relativized separation may be found between $\mathsf{BQP}$ and $\mathsf{PH}$.

In the remainder of this section we recall the definition of the recursive Fourier sampling problem and the optimal classical and quantum algorithms.  Our main contribution is in section 2, where we give an algorithm for an interactive proof for RFS and show how the standard quantum algorithm for RFS can be adapted to be used as an efficient prover for the interactive proof.

\subsection{The recursive Fourier sampling problem}
We begin by defining a type of tree.  Let $n,l$ be a positive integers and consider a symmetric tree where each node, except the leaves, has $2^{n}$ children, and the depth is $l$.  Let the root be labeled by $(\emptyset)$.  The root's children are labelled $(x_{1})$ with $x_{1} \in \{0,1\}^{n}$.  Each child of $(x_{1})$ is, in turn, labelled $(x_{1}, x_{2})$ with $x^{2} \in \{0,1\}^{n}$.  We continue until we have reached the leaves, which are labelled by $(x_{1}, \dots, x_{l})$.  Thus each node's label can be thought of as a path describing how to find the node from the root.

Now we add the Fourier component to tree.  We begin by fixing an efficiently computable function $g: \{0,1\}^{n} \rightarrow \{0,1\}$.\footnote{
For concreteness $g(s)$ may be taken to be the sum of the components of $s$, modulo 3, i.e. the number of 1s, or Hamming weight, of $s$ modulo 3.}
 With each node of the tree $(x_{1}, \dots, x_{k})$ we associate a ``secret'' string $s_{(x_{1}, \dots, x_{k})} \in \{0,1\}^{n}$.  These secrets are promised to obey 
\begin{equation}
g\left(s_{(x_{1}, \dots, x_{k})}\right)  = s_{(x_{1}, \dots, x_{k-1})} \cdot x_{k}
\end{equation}
for $k \geq 1$, and the inner product taken modulo 2.  (Here we take $s_{(x_{1}, \dots, x_{k-1})}$ to mean $s_{(\emptyset)}$ if $k=1$.)  In this way, each node's secret encodes one bit of information about its parent's secret.

Now suppose that we are given an oracle $A: (\{0,1\}^{n})^{l} \rightarrow \{0,1\}$ which behaves as\footnote{
	In fact the values $s_{(x_{1}, \dots, x_{l})}$ are not necessary, since we can only ever learn $g(s_{(x_{1}, \dots, x_{l})})$.  However, including them in the definition eliminates special cases at the level $l$.
}
\begin{equation}
A\left(x_{1}, \dots, x_{l}\right) = g\left(s_{(x_{1}, \dots, x_{l})}\right).
\end{equation}
Note that $A$ works for the leaves of the tree \emph{only}.  Our goal is to find $g\left(s_{(\emptyset)}\right)$.  This is the recursive Fourier sampling problem (RFS).

At this point we wish to discuss the recursive nature of the RFS problem.  First, note that the subtree rooted at any node obeys the same promises as the whole tree.  Thus each subtree defines an RFS problem.  (The subtree rooted at a leaf is a trivial problem, where the oracle just returns the solution.)  Thus we have a methodology for solving RFS problems:  solve subtrees in order to determine information about the root's secret, then calculate the secret.  Solving subtrees uses the same method, except when we reach a leaf, where the oracle is used instead.  This is a type of top down recursive structure, where the tree is built from a number of subtrees with the same structure.

Another way of viewing the tree is from the bottom up.  Note that if we remove all the leaf nodes, truncating the tree, then the remaining tree still obeys all the required promises.  The oracle no longer returns relevant information, however.  This problem can be solved by building a new oracle using the old one.  The new oracle, given $(x_{1}, \dots, x_{l-1})$, calculates the secret $s_{(x_{1}, \dots, x_{l-1})}$ by accessing the old oracle.  The new oracle then returns $g(s_{(x_{1}, \dots, x_{l-1})})$.  An oracle constructed in this way behaves exactly as the old oracle does, but for the truncated tree.  The tree can be truncated again to depth $l-2$ and another new oracle is built on the previous one.  The result is a recursive algorithm that eventually solves the RFS problem.

With a little thought, it is easy to see that both pictures of the recursive structure of the RFS problem are essentially equivalent.  Indeed, solving a subtree rooted at $(x_{1}, \dots, x_{k})$ just means returning $g(s_{(x_{1}, \dots, x_{k})})$, which is what the oracle for the tree truncated at level $k$ does.  However, each picture can be useful in understanding the structure of algorithms.

\subsection{Classical solution}
Now let us consider the following solution to the recursive Fourier sampling problem in the classical setting.  To calculate $g\left(s_{(\emptyset)}\right)$ we must first find $s_{(\emptyset)}$.  To do so, let us define $1_{j}$ to be the $n$-bit string with a 1 in the $j$th position and 0 elsewhere.  Define $(s_{(\emptyset)})_{j}$ to be the $j$th bit of $s_{(\emptyset)}$, which is given by $s_{(\emptyset)} \cdot 1_{j}$.  These values are given by the solution to the RFS problem defined on the subtree rooted at $(x_{1})$ with $x_{1} = 1_{j}$.  After determining these values for $j = 1Ê\dots n$ we have determined $s_{(\emptyset)}$.  Hence we obtain the following algorithm:

\begin{algorithm}[RFS]\label{algorithm:rfs}
Input: oracle $A$, subroutine $g$, $l,k$, $(x_{1}, \dots, x_{k})$ 
\begin{enumerate}
	\item If $k = l$ then return $A(x_{1}, \dots, x_{l})$
	\item For $j = 1 \dots n$
	\begin{enumerate}
		\item Set $(s)_{j} = \text{RFS}(A, g, l, k+1, (x_{1}, \dots, x_{k}, 1_{j}))$
	\end{enumerate}
	\item Return $g(s)$
\end{enumerate}
\end{algorithm}

%
\begin{lemma}
$\text{RFS}(A,g,l,0,())$ returns $g(s_{(\emptyset)})$.
\end{lemma}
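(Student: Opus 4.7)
The plan is a straightforward induction on the quantity $l-k$, i.e.\ on the depth remaining below the current node. The induction hypothesis I would use is: for every node label $(x_{1},\dots,x_{k})$ and every $k \le l$, the call $\text{RFS}(A,g,l,k,(x_{1},\dots,x_{k}))$ returns $g(s_{(x_{1},\dots,x_{k})})$. The target statement is then the specialization $k = 0$.

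For the base case $k = l$, step 1 of the algorithm returns $A(x_{1},\dots,x_{l})$, which by the definition of the oracle is exactly $g(s_{(x_{1},\dots,x_{l})})$, so the hypothesis holds at the leaves. For the inductive step, assume the hypothesis at depth $k+1$ and consider a call at depth $k < l$. The algorithm enters the loop, and for each $j = 1,\dots,n$ the recursive call with label $(x_{1},\dots,x_{k},1_{j})$ returns, by the inductive hypothesis, the value $g(s_{(x_{1},\dots,x_{k},1_{j})})$. Applying the RFS promise with child label $x_{k+1} = 1_{j}$ gives
\begin{equation}
(s)_{j} \;=\; g(s_{(x_{1},\dots,x_{k},1_{j})}) \;=\; s_{(x_{1},\dots,x_{k})} \cdot 1_{j} \;=\; \bigl(s_{(x_{1},\dots,x_{k})}\bigr)_{j},
\end{equation}
where the last equality uses the fact that the inner product with $1_{j}$ extracts the $j$th bit. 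Hence after the loop the reconstructed string $s$ agrees bit-by-bit with $s_{(x_{1},\dots,x_{k})}$, so step 3 returns $g(s) = g(s_{(x_{1},\dots,x_{k})})$, completing the induction.

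Specializing to $k = 0$ with the convention $s_{(\emptyset)}$ playing the role of the parent secret for level $1$ yields the claim. There is essentially no obstacle here: the content of the proof is just checking that the promise in equation~(1) is precisely what is needed to turn the recursively obtained bits into the correct parent secret, and that the convention about the $k=1$ case matches the algorithm's behavior when it is launched with the empty label.
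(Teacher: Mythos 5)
Your proof is correct and follows essentially the same route as the paper's: induction downward from $k=l$ (where the oracle gives the answer directly) to $k=0$, using the RFS promise to convert the recursively obtained values $g(s_{(x_{1},\dots,x_{k},1_{j})})$ into the bits of $s_{(x_{1},\dots,x_{k})}$. The only difference is that you spell out the step $s_{(x_{1},\dots,x_{k})}\cdot 1_{j} = (s_{(x_{1},\dots,x_{k})})_{j}$ explicitly, which the paper leaves implicit.
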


\begin{proof}
We proceed by induction.  We claim that for $0 \leq k \leq l$
\begin{equation}
\text{RFS}(A,g,l,k,(x_{1}, \dots, x_{k})) = g(s_{(x_{1},\dots, x_{k})}).
\end{equation}
For $k=l$ this is true by the definition of the oracle.

Now suppose that $0 \leq k < l$.  By induction, step (a) sets $(s)_{j} = g\left(s_{(x_{1}, \dots, x_{k}, 1_{j})}\right)$ which is promised to equal $s_{(x_{1}, \dots, x_{k})} \cdot 1_{j}$.  Thus $s = s_{(x_{1}, \dots, x_{k})}$ and the function returns $g(s_{(x_{1},\dots, x_{k})})$, as claimed.  Calling $RFS(A,g,l,0,())$ then returns $g\left(s_{(\emptyset)}\right)$.

\end{proof}

The query complexity of this algorithm is $n^{l}$, which can be seen since each call to $RFS$ calls itself $n$ times, and the depth of recursion is $l$.  

This solution is due to Bernstein and Vazirani \cite{Bernstein:1997:Quantum-Complex}, who also gave a matching lower bound on the number of queries, so the query complexity is $\Theta(n^{l})$.


\subsection{Quantum solution}\label{sec:quantumsolution}
We now consider the analogous quantum problem, where the oracle $A$ allows quantum access according to
\begin{equation}
	A\ket{x_{1}}\dots \ket{x_{l}}\ket{y} = \ket{x_{1}} \dots \ket{x_{l}} \ket{y \oplus g(s_{(x_{1}, \dots, x_{l})})}.
\end{equation}
In addition, we have an efficient quantum algorithm which calculates $g$ as
\begin{equation}
G\ket{s}\ket{y} = \ket{s}\ket{y \oplus g(s)}.
\end{equation}
The main idea behind the algorithm is to use the fact that $H^{\otimes n}$ transforms $\ket{y}$ into $\sum_{x} (-1)^{x \cdot y} \ket{x}$ and vice versa.  We use phase feedback and a call to $A$ to create the state $\sum_{x_{l}} (-1)^{s_{(x_{1}, \dots, x_{l-1})} \cdot x_{l}} \ket{x}$, then apply $H^{\otimes n}$ to obtain $\ket{s_{(x_{1}, \dots, x_{l-1})}}$.  After calculating $g(s_{(x_{1}, \dots, x_{l-1})})$, we uncompute $\ket{s_{(x_{1}, \dots, x_{l-1})}}$ to disentangle this register from other registers.

\begin{algorithm}[QRFS]\label{algorithm:qrfs}
Input: oracle $A$, subroutine $G_{j}$, $l$, $k$, quantum registers $\mathcal{X}_{1}, \dots, \mathcal{X}_{k}$, $\mathcal{Y}$.
\begin{enumerate}
	\item If $k= l$ then apply $A$ to $(\mathcal{X}_{1} \dots \mathcal{X}_{l})$ and $\mathcal{Y}$, then return.
	\item Introduce ancilla $\mathcal{X}_{k+1}$ in the state $\frac{1}{\sqrt{2^{n}}} \sum_{x_{k+1} \in \{0,1\}^{n}} \ket{x_{k+1}}_{\mathcal{X}_{k+1}}$.
	\item Introduce ancilla $\mathcal{Y}^{\prime}$ in state $\frac{1}{\sqrt{2}}\left(\ket{0}_{\mathcal{Y}^{\prime}} - \ket{1}_{\mathcal{Y}^{\prime}}\right)$.	
	\item Call  $QRFS(A,G,l,k+1,(\mathcal{X}_{1} \dots \mathcal{X}_{k+1}), \mathcal{Y}^{\prime})$
	\item Apply $H^{\otimes n}$ on register $\mathcal{X}_{k+1}$.
	\item Apply $G$ to $\mathcal{X}_{k+1}$ and $\mathcal{Y}$.
	\item Apply $H^{\otimes n}$ on register $\mathcal{X}_{k+1}$.
	\item Call $QRFS(A,G,l,k+1,(\mathcal{X}_{1} \dots \mathcal{X}_{k+1}), \mathcal{Y}^{\prime})$
	\item Discard $\mathcal{X}_{k+1}$ and $\mathcal{Y}^{\prime}$.
	\item Return
\end{enumerate}
\end{algorithm}

%
\begin{lemma}
$\text{QRFS}(A,G,l,0,(), \ket{0})$ returns $\ket{g(s_{(\emptyset)})}$.
\end{lemma}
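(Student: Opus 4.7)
The plan is to proceed by induction on $l - k$, proving the stronger claim that the unitary $\text{QRFS}(A, G, l, k, (\mathcal{X}_1, \dots, \mathcal{X}_k), \mathcal{Y})$ acts on computational-basis inputs as
\[
\ket{x_1}_{\mathcal{X}_1} \cdots \ket{x_k}_{\mathcal{X}_k} \ket{y}_{\mathcal{Y}} \longmapsto \ket{x_1}_{\mathcal{X}_1} \cdots \ket{x_k}_{\mathcal{X}_k} \ket{y \oplus g(s_{(x_1, \dots, x_k)})}_{\mathcal{Y}},
\]
with any ancillas introduced inside the call returned to their initial product form so that step 9 is harmless. The statement of the lemma is then the $k = 0$, $y = 0$ specialization, and the base case $k = l$ is immediate from the definition of the quantum oracle $A$ in step 1.

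For the inductive step I would trace the state through steps 2--9, verifying the informal description in Section~\ref{sec:quantumsolution}. After preparing the uniform superposition on $\mathcal{X}_{k+1}$ and the $\frac{1}{\sqrt{2}}(\ket{0}-\ket{1})$ state on $\mathcal{Y}'$, the inductive hypothesis applied to step 4 XORs $g(s_{(x_1, \dots, x_{k+1})})$ into $\mathcal{Y}'$; by phase kickback this leaves $\mathcal{Y}'$ fixed and attaches the phase $(-1)^{g(s_{(x_1, \dots, x_{k+1})})}$ to each $\ket{x_{k+1}}$. The RFS promise $g(s_{(x_1, \dots, x_{k+1})}) = s_{(x_1, \dots, x_k)} \cdot x_{k+1}$ rewrites that phase so that $\mathcal{X}_{k+1}$ holds the Fourier transform of $\ket{s_{(x_1, \dots, x_k)}}$. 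Step 5's Hadamards then materialize $\ket{s_{(x_1, \dots, x_k)}}$ on $\mathcal{X}_{k+1}$, and step 6 XORs $g(s_{(x_1, \dots, x_k)})$ into $\mathcal{Y}$---exactly the desired action on the target register.

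The one point requiring real care is showing that steps 7 and 8 genuinely uncompute steps 5 and 4, so that the ancillas end disentangled from the kept registers and step 9 is lossless. This hinges on involutivity: $H^{\otimes n}$ is self-inverse, and by the inductive hypothesis the recursive call is itself an XOR of a fixed bit into $\mathcal{Y}'$, hence an involution, so the second invocation in step 8 reverses step 4 and restores $\mathcal{X}_{k+1}$ to the uniform superposition and $\mathcal{Y}'$ to $\frac{1}{\sqrt{2}}(\ket{0}-\ket{1})$. Once this is verified, linearity extends the basis-state analysis to arbitrary superpositions of $(x_1, \dots, x_k)$, completing the induction; this uncomputation check is the main thing to nail down, since without it the coherent discard in step 9 would spoil the result.
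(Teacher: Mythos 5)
Your proposal is correct and follows essentially the same route as the paper's own proof: induction on the level with the strengthened claim that the call acts as $\ket{x_1}\cdots\ket{x_k}\ket{y} \mapsto \ket{x_1}\cdots\ket{x_k}\ket{y \oplus g(s_{(x_1,\dots,x_k)})}$, phase kickback plus the RFS promise to produce $\ket{s_{(x_1,\dots,x_k)}}$ after the Hadamards, and uncomputation of the ancillas before the discard. Your extra attention to why steps 7 and 8 restore the ancillas (self-inverseness of $H^{\otimes n}$ and involutivity of the XOR-type recursive call) is a welcome elaboration of a point the paper states only briefly.
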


\begin{proof}
To analyze the correctness of the algorithm we proceed by induction with the hypothesis that for each $0 \leq k \leq l$,  $QRFS(A,G,l,k, (\mathcal{X}_{1}\dots \mathcal{X}_{k}), \mathcal{Y})$ applied to $\ket{x_{1}}\dots\ket{x_{k}}\ket{y}$ gives
\begin{equation}
	\ket{x_{1}}\dots\ket{x_{k}}\ket{y \oplus g\left(s_{(x_{1}, \dots, x_{k})}\right)}.
\end{equation}
(The output for general input states is determined by linearity.)  This is true for $k=l$ by definition of the oracle.

Now let $0 \leq k < l$.  We introduce the state $\frac{1}{\sqrt{2}}\left(\ket{0} - \ket{1}\right)$ in step 3 in order to use phase kickback.  By hypothesis, then, step 4 introduces phase of $(-1)^{g(s_{(x_{1}, \dots, x_{k+1})})}$, which is the same as 
$(-1)^{s_{(x_{1}, \dots, x_{k})} \cdot x_{k+1}}$.  The state after step 4 is thus
\begin{equation}
\frac{1}{\sqrt{2^{n}}}\sum_{x_{k+1}} (-1)^{s_{(x_{1}, \dots, x_{k})} \cdot x_{k+1}}\ket{x_{1}}\dots \ket{x_{k+1}} \ket{y} \ket{-}.
\end{equation}
After step 5 the state then becomes
\begin{equation}
	\ket{x_{1}}\dots \ket{x_{k}} \ket{s_{(x_{1}, \dots, x_{k})}}\ket{y} \ket{-}.
\end{equation}
Step 6 changes $\ket{y}$ to $\ket{y \oplus g\left(s_{(x_{1}, \dots, x_{k})}\right)}$.  Next, steps 7 and 8 uncompute the $\mathcal{X}_{k+1}$ register, so it and the $\mathcal{Y}^{\prime}$ register are returned to their original state when they are discarded in step 9.  Thus we obtain the required output state.

From the inductive hypothesis, we see that $QRFS(A,G,l,0, (), \mathcal{Y})$ applied to $\ket{0}$ gives $\ket{g\left(s_{(\emptyset)}\right)}$.  
\end{proof}

Note that if we apply $QRFS(A,G,l,k, (\mathcal{X}_{1}\dots \mathcal{X}_{k}), \mathcal{Y})$ to $\ket{x_{1}}\dots\ket{x_{k}}\ket{y}$, and stop after step 5, we obtain $\ket{s_{(x_{1}, \dots, x_{k}})}$ in the $\mathcal{X}_{k+1}$ register.  This is equivalent to solving the RFS problem defined by the subtree rooted at $(x_{1}, \dots, x_{k})$.  Thus we can also efficiently calculate any $s_{(x_{1}, \dots, x_{k}})$.

This algorithm is due to Bernstein and Vazirani \cite{Bernstein:1997:Quantum-Complex}.  The quantum query complexity of this algorithm is $O(2^{l})$, since two recursive calls are made, and the depth is $l$.  Aaronson \cite{Aaronson:2003:Quantum-Lower-B} gave a matching lower bound.  

\subsection{Complexity implications}

In the previous sections we have kept both $n$, the size of the bit strings, and $l$, the depth of the tree.  Typically, RFS is considered with $l = \log_{2} n$.  In this case we obtain query complexities of $\Theta(n^{\log_{2} n})$ classically, and $O(n)$ quantumly.  Hence we obtain the relativized separation of $\mathsf{BQP} \nsubseteq \mathsf{P}$.  It is also for this value of $l$ that $\text{RFS} \notin \mathsf{MA}$.

\section{Interactive proof}
Suppose now that we have, in addition to the oracle $A$ and classical computing resources, access to a prover $P$ who has more powerful computing resources.  We have seen that for the choice $l = \log_{2} n$ there is no way of efficiently computing the solution to the recursive Fourier sampling problem directly.  In fact, since RFS is not in $\mathsf{NP}$, we cannot even efficiently verify a solution if it is given (along with a short proof.)  We will show, however, that by \emph{interacting} with $P$ we can efficiently compute the solution, or detect if $P$ is giving false information.  That is, RFS is in $\mathsf{IP}$.

It is not surprising that RFS is in $\mathsf{IP}$, indeed $\mathsf{BQP} \subseteq \mathsf{IP}$ in the unrelativized world.  However, $\mathsf{IP}$ is defined with an computationally unbounded prover, and here we will see that for RFS it suffices to have an efficient quantum prover.

\subsection{Classical verifier}

Looking back at the original classical solution to RFS in Algorithm~\ref{algorithm:rfs} we see that there are two steps:  find $s_{(x_{1}, \dots, x_{k})}$ and calculate $g(s_{(x_{1}, \dots, x_{k})})$.  The difficult part is finding $s_{(x_{1}, \dots, x_{k})}$, so we can instead ask $P$ to do this for us.  Since we do not trust $P$, we should perform some type of test to see whether $P$ has really given us the correct value of $s_{(x_{1}, \dots, x_{k})}$.

Suppose that $P$ gives us a string $s^{\prime}_{(x_{1}, \dots, x_{k})}$.  In principle we can detect the case $s^{\prime}_{(x_{1}, \dots, x_{k})} \neq s_{(x_{1}, \dots, x_{k})}$ by instead looking at whether $s^{\prime}_{(x_{1}, \dots, x_{k})} \cdot x_{k+1} = s_{(x_{1}, \dots, x_{k})} \cdot x_{k+1}$ for a randomly chosen $x_{k+1} \in \{0,1\}^{n}$.  If $s^{\prime}_{(x_{1}, \dots, x_{k})} \neq s_{(x_{1}, \dots, x_{k})}$ then $s^{\prime}_{(x_{1}, \dots, x_{k})} \cdot x_{k+1} \neq s_{(x_{1}, \dots, x_{k})} \cdot x_{k+1}$ for half of all strings $x_{k+1} \in \{0,1\}^{n}$.  If we check for $c$ independently chosen strings $x_{k+1}$ then we will fail to detect $P$'s deception with probability $2^{-c}$.

Since we have $s^{\prime}_{(x_{1}, \dots, x_{k})}$, we can calculate $s^{\prime}_{(x_{1}, \dots, x_{k})} \cdot x_{k+1}$ readily, but how do we find $s_{(x_{1}, \dots, x_{k})} \cdot x_{k+1}$?  We use recursion:  we ask $P$ for $s_{(x_{1}, \dots, x_{k+1})}$ and again perform the test.  After the recursion is deep enough, we can query the oracle directly to find $s_{(x_{1}, \dots, x_{l-1})} \cdot x_{l} = A(x_{1}, \dots, x_{l})$.

\begin{algorithm}[VERIFIER]
Input: Oracle $A$, Prover $P$,  subroutine $g$, total number of levels $l$, current level $k$, queries $(x_{1}, \dots x_{k})$.
\begin{enumerate}
	\item If $k = l$ then return $A(x_{1}, \dots, x_{l})$

	\item Query $P$ for $s^{\prime}_{(x_{1}, \dots, x_{k})}$
	\item Repeat 3 times:
	\begin{enumerate}
		\item Randomly choose $x_{k+1} \in \{0,1\}^{n}$
		\item Set $a= \text{VERIFIER}(A, P,g,l,k+1,(x_{1}, \dots, x_{k+1}))$
		\item If $a \neq s^{\prime}_{(x_{1}, \dots, x_{k})} \cdot x_{k+1}$ then abort
	\end{enumerate}
	\item Return $g(s^{\prime}_{x_{1}, \dots, x_{k}})$
\end{enumerate}
\end{algorithm}

We must also specify what the behaviour of $P$ should be.  When $P$ receives a query $(x_{1}, \dots, x_{k})$, it should return a string $s^{\prime}_{(x_{1}, \dots, x_{k})} \in \{0,1\}^{n}$.  For an honest $P$ this should be equal to $s_{(x_{1}, \dots, x_{k})}$.

Clearly there exists a prover $P$ which always returns $s_{(x_{1}, \dots, x_{k})}$ when queried:  $P$ can solve a subtree of the full RFS problem, rooted at $(x_{1}, \dots, x_{k})$ to find $s_{(x_{1}, \dots, x_{k})}$.  For such a prover, we may perform $\text{VERIFIER}(A, P, g,l,0, ())$ to obtain $g\left(s_{(\emptyset)}\right)$, which is the answer to the RFS problem.  

\begin{lemma}
Let $P$ be a prover that always returns $s^{\prime}_{(x_{1}, \dots, x_{k})} = s_{(x_{1}, \dots, x_{k})}$ when queried.  Then
\begin{equation}
\text{VERIFIER}(A,P, g,l,0,()) = g(s_{(\emptyset)}).
\end{equation}
\end{lemma}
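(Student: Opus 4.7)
The plan is to mirror the inductive proof used for the classical Algorithm~\ref{algorithm:rfs}, strengthening the induction to make a claim about every recursive call, not just the top-level one. Specifically, I would show by (backward) induction on $k$ that, for $0 \le k \le l$ and every choice of $(x_1, \dots, x_k)$,
\begin{equation}
\text{VERIFIER}(A, P, g, l, k, (x_1, \dots, x_k)) = g(s_{(x_1, \dots, x_k)}),
\end{equation}
and that moreover the procedure never aborts. The lemma is then the special case $k = 0$.

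For the base case $k = l$, step 1 of VERIFIER immediately returns $A(x_1, \dots, x_l)$, which equals $g(s_{(x_1, \dots, x_l)})$ by the defining property of the oracle. No abort is possible since step 3 is not executed. For the inductive step, fix $0 \le k < l$. Since $P$ is the honest prover, step 2 yields $s'_{(x_1, \dots, x_k)} = s_{(x_1, \dots, x_k)}$. In each of the three iterations of step 3, a uniformly random $x_{k+1}$ is drawn, and by the inductive hypothesis the recursive call in step 3(b) returns $a = g(s_{(x_1, \dots, x_k, x_{k+1})})$ without aborting. The promise (1) relating secrets at adjacent levels then gives $a = s_{(x_1, \dots, x_k)} \cdot x_{k+1} = s'_{(x_1, \dots, x_k)} \cdot x_{k+1}$, so the test in step 3(c) passes and the procedure does not abort. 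After the three iterations, step 4 returns $g(s'_{(x_1, \dots, x_k)}) = g(s_{(x_1, \dots, x_k)})$, completing the induction.

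Applying the claim at $k=0$ with the empty tuple yields $\text{VERIFIER}(A, P, g, l, 0, ()) = g(s_{(\emptyset)})$, as required. I do not anticipate any real obstacle here: the honest-prover case is essentially a bookkeeping argument, with the only substantive ingredient being the promise (1), which is exactly the relation the verifier's consistency check was designed to exploit. The genuinely delicate analysis, namely bounding the probability that a dishonest $P$ fools the verifier, is deferred to the soundness discussion elsewhere in the section and is not needed for this lemma.
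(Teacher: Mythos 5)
Your proof is correct and takes essentially the same route as the paper's: a backward induction on $k$ showing that each recursive call returns $g(s_{(x_1,\dots,x_k)})$ and never aborts, using the honesty of $P$ together with the promise to see that the check in step 3(c) always passes. If anything, your write-up is slightly cleaner than the paper's, which contains an index slip (writing $s_{(x_1,\dots,x_{k-1})}\cdot x_k$ where $s_{(x_1,\dots,x_k)}\cdot x_{k+1}$ is meant).
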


\begin{proof}
We proceed via induction, as in previous proofs.  We claim that for $0 \leq k \leq l$
\begin{equation}
\text{VERIFIER}(A,P,g,l,k,(x_{1}, \dots, x_{k})) = g(s_{(x_{1},\dots, x_{k})}).
\end{equation}
This is true for $k=l$ from the definition of the problem and line 1.  For $k < l$ we see from line 4 that the claim holds for this choice of $P$ so long as the algorithm does not abort in line (c).  

Now suppose that $0 \leq k < l$.  By induction, step (b) sets $a = g(s_{(x_{1},\dots, x_{k+1})}) = s_{(x_{1}, \dots, x_{k-1})} \cdot x_{k}$ by the definition of the problem.  Again, by the choice of $P$ this is always equal to $s^{\prime}_{(x_{1}, \dots, x_{k-1})} \cdot x_{k}$ and the algorithm never aborts.

\end{proof}

Our next concern is what the algorithm does when interacting with a prover $P$ that is not return correct strings.  The next lemma says that the algorithm returns an incorrect result with low probability; the rest of the time the algorithm aborts.  Hence we are protected from a malicious $P$.

\begin{lemma}
For any $P$, the probability that $\text{VERIFIER}(A,P,g,l,0,())$ does not abort and returns a value not equal to $g(s_{(\emptyset)})$ is at most $\frac{1}{4}$.

\end{lemma}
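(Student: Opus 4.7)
The plan is to establish, by downward induction on $k$, that the quantity
\[
p_k \;=\; \sup \Pr\bigl[\,\text{VERIFIER}(A,P,g,l,k,(x_1,\dots,x_k))\text{ does not abort and returns a value} \neq g(s_{(x_1,\dots,x_k)})\,\bigr]
\]
is at most $1/4$ for every $0 \le k \le l$, where the supremum is taken over all provers $P$, all inputs $(x_1,\dots,x_k)$ consistent with the promise, and all internal prover states that may already hold at the start of the call. Taking the supremum over prover states is the clean way to handle an adaptive prover that can carry information across the three test iterations.

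The base case $p_l = 0$ is immediate from step 1: the verifier returns $A(x_1,\dots,x_l) = g(s_{(x_1,\dots,x_l)})$ and so simply cannot be wrong.

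For the inductive step, fix $k<l$, write $s = s_{(x_1,\dots,x_k)}$, and let $s'$ be the prover's step-2 response. If $g(s') = g(s)$ (which in particular covers $s'=s$), then any value step 4 could return is already correct, and the bad event has probability zero. Otherwise $g(s') \neq g(s)$, which forces $s' \neq s$, and the bad event requires all three test iterations to complete without aborting. The core estimate is a bound on the pass probability of a single iteration: the fresh uniform $x_{k+1}$ splits $\{0,1\}^n$ into two halves of equal size according to whether $s\cdot x_{k+1}$ agrees with $s'\cdot x_{k+1}$ (this is exactly where $s \neq s'$ is used). On the agreeing half, passing requires the recursive call to return $s\cdot x_{k+1}$, which is at most probability $1$; on the disagreeing half, passing requires the recursive call to return the opposite bit without aborting, which is at most $p_{k+1}$ by the inductive hypothesis. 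Averaging yields the per-iteration bound $(1+p_{k+1})/2$. Since $p_{k+1}$ is a supremum over \emph{all} prover states, this bound remains valid conditionally on the history of earlier iterations, so multiplying across the three iterations gives
\[
p_k \;\le\; \Bigl(\frac{1+p_{k+1}}{2}\Bigr)^3.
\]

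The induction then closes by a one-line numerical check: if $p_{k+1}\le 1/4$ then $p_k \le (5/8)^3 = 125/512 < 1/4$, and the base case $p_l = 0$ starts it off. Specializing to $k=0$ yields the statement. The only step that demands real care is multiplying the per-iteration bounds across the three tests in the presence of an adaptive prover; this is handled not by an independence argument but by the conditioning argument built into the ``supremum over prover states'' in the definition of $p_{k+1}$.
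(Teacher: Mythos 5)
Your proposal is correct and follows essentially the same route as the paper: the same inductive quantity $p_k$, the same case split on whether $g(s')=g(s)$, the same per-iteration bound $\frac{1}{2}(1+p_{k+1})$, and the same numerical check $\frac{1}{8}(5/4)^3 = 125/512 \le 1/4$. Your explicit handling of the adaptive prover via a supremum over internal prover states is a small refinement of a point the paper passes over silently, but it does not change the argument.
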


\begin{proof}
Let $p_{k}$, $0 \leq k \leq l$ be the probability that VERIFIER does not abort and returns a value that is \emph{not} equal to $g(s_{(x_{1}, \dots, k_{k})})$.  To be precise, we should specify what $P$ does, since $p_{k}$ can depend on $P$'s behaviour.  Let us then take the maximal $p_{k}$ over all choices of $P$, which will give us an upper bound for any particular $P$.

We proceed by induction with the hypothesis that $p_{k} \leq 1/4$.  For $k = l$ this true since $A(x_{1}, \dots x_{l}) = g(s_{(x_{1}, \dots, x_{l})})$ so $p_{l} = 0$.

Now suppose that $0 \leq k < l$.  There are two cases.  First, $P$ returns $s^{\prime}_{(x_{1}, \dots, x_{k})}$ such that $g(s^{\prime}_{(x_{1}, \dots, k_{k})}) =  g(s_{(x_{1}, \dots, k_{k})})$.  In this case VERIFIER always returns the correct value, so $p_{k} = 0$.  Note that VERIFIER may still abort, since $s^{\prime}_{(x_{1}, \dots, x_{k})}$ may not equal $s_{(x_{1}, \dots, x_{k})}$, or the recursion may return an incorrect result.

Now consider the case where $P$ returns some $s^{\prime}_{(x_{1}, \dots x_{k})}$ such that $g(s^{\prime}_{(x_{1}, \dots x_{k})}) \neq g(s_{(x_{1}, \dots x_{k})})$.  If the algorithm does not abort in line (c) then one of two things happened:  either $s^{\prime}_{(x_{1}, \dots x_{k})} \cdot x_{k+1} = s_{(x_{1}, \dots x_{k})} \cdot x_{k+1}$ for this choice of $x_{k+1}$, which happens with probability $\frac{1}{2}$, or $s^{\prime}_{(x_{1}, \dots x_{k})} \cdot x_{k+1} \neq s_{(x_{1}, \dots x_{k})} \cdot x_{k+1}$ and at the same time the recursion in (b) returns an incorrect value.  The latter two events occur together with probability at most $\frac{p_{k+1}}{2}$.  Thus, for each of the three repetitions, the probability of not aborting is at most $\frac{1}{2}( 1+ p_{k+1})$ and the overall chance of not aborting is
\begin{equation}
p_{k} \leq \frac{1}{2^{3}}\left(1 + p_{k+1}\right)^{3}.
\end{equation}
By the induction hypothesis, $p_{k+1} \leq \frac{1}{4}$.  Then 
\begin{equation}
p_{k} \leq \frac{1}{8} \left( 1 + \frac{1}{4} \right)^{3} = \frac{125}{512} \leq \frac{1}{4}.
\end{equation}
Using induction, we obtain the result for $0 \leq k \leq l$.  In particular, $p_{0} \leq 1/4$, which is the desired result.

\end{proof}

\begin{corollary}
$\text{VERIFIER}(A,P,g,l,0,())$ solves the recursive Fourier sampling problem with completeness 1 and soundness $1/4$.
\end{corollary}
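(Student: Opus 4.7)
The plan is essentially to repackage the two preceding lemmas in the standard language of interactive proofs. For completeness, the previous lemma already exhibits an explicit honest prover, namely the $P$ that answers every query $(x_1,\dots,x_k)$ with the true secret $s_{(x_1,\dots,x_k)}$ (which $P$ can compute by solving the subtree rooted there, since $P$ is computationally unbounded in the $\mathsf{IP}$ definition). Against this $P$ the algorithm VERIFIER$(A,P,g,l,0,())$ returns $g(s_{(\emptyset)})$ deterministically, never aborting, so the verifier outputs the correct bit with probability $1$; this is completeness $1$.

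For soundness, the other lemma is already doing all the work: it bounds, uniformly over \emph{all} prover strategies, the probability that VERIFIER fails to abort yet outputs a value different from $g(s_{(\emptyset)})$ by $1/4$. In the $\mathsf{IP}$ convention we interpret an abort as the verifier rejecting the prover's claim and a non-abort return value as the verifier's decided answer, so this bound says that no cheating prover can steer the verifier to output the wrong bit with probability exceeding $1/4$. That is exactly soundness $1/4$.

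I do not anticipate any real obstacle; the only points that need to be made explicit are (i) the existence of the honest prover (immediate, since quantumly or classically $P$ can in principle solve every subtree), and (ii) the convention that a VERIFIER abort is counted as rejection so that the probabilistic bound of the previous lemma matches the soundness definition. Hence the proof will consist of two short sentences, one citing each lemma.
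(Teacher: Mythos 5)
Your proposal is correct and matches the paper exactly: the corollary is stated there without proof precisely because it is the immediate combination of the two preceding lemmas, with completeness $1$ from the honest-prover lemma and soundness $1/4$ from the bound on the probability of a non-aborting incorrect return. Your two clarifying remarks (existence of the honest prover and treating an abort as rejection) are exactly the right conventions to make explicit.
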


The algorithm uses $3^{l}$ queries to the oracle and fewer than $3^{l}$ to the prover.  With the choice $l = \log_{2}n$ this is polynomial in $n$.

\subsection{Quantum prover}

Although the class $\mathsf{IP}$ is usually defined for a computationally unbounded prover, it happens that for RFS and the verifier presented above, the prover can be an efficient quantum prover.  That is, if the prover is quantum then it need only make a polynomial number of queries to the oracle.

As mentioned in section~\ref{sec:quantumsolution}, algorithm~\ref{algorithm:qrfs} makes at most $2^{l}$ queries to find $s_{(\emptyset)}$ and can be adapted to find $s_{(x_{1}, \dots, x_{k})}$ using at most this number of queries.  We can thus readily adapt algorithm~\ref{algorithm:qrfs} to create a prover for the above classical verifier.  The total number queries that the prover will make is less than $3^{l}2^{l} \leq n^{2.58}$.

\section{Discussion}

Recently there has been interest in interactive proofs in a quantum context and, most relevant here, interaction between quantum provers and classical verifiers.  Work in this direction began with Mayers and Yao \cite{Mayers:2004:Self-testing-qu}, and Magniez et al.\ \cite{Magniez:2006:Self-testing-of} who showed how to classically verify (with certain assumptions) the operation of quantum apparatus, including entire circuits.   Recently, Aharonov et al.\ \cite{Aharonov:2008:Interactive-Pro} and Broadbent et al.\ \cite{Broadbent:2008:Universal-blind} considered verifiers with limited quantum capabilities.  As well, Broadbent et al.\ demonstrated an interactive protocol, which has the full power of $\mathsf{BQP}$, between two entangled, but non-communicating, efficient quantum provers and an efficient classical verifier.  Ideally we would like to show that that every problem in $\mathsf{BQP}$ has an interactive proof with a single efficient quantum prover and a classical verifier (let us call the class of such problems $\mathsf{IP_{BQP}}$.)  Unfortunately, self-testing and the Broadbent et al.\ protocol both rely fundamentally upon non-local properties of quantum theory, making the techniques unsuitable to the single-prover scenario.

In this context, the current work is interesting because it shows that there is an oracle relative to which $ \mathsf{IP_{BQP}} \nsubseteq \mathsf{MA}$\footnote{Note that the more general result of Hallgren and Harrow \cite{Hallgren:2008:Superpolynomial} does not have this property, since the oracle in their construction can be used to verify a certificate.}.  Since $\mathsf{BQP} \subseteq \mathsf{PSPACE} = \mathsf{IP}$ an interactive proof exists for every problem in $\mathsf{BQP}$, but the only known construction, due to Shamir \cite{Shamir:1992:IP--PSPACE}, uses a $\mathsf{PSPACE}$ prover.  The interest for the current result, then, is the fact that the interactive proof has an efficient quantum prover.

Another interesting aspect of this work is the fact that the structure of the interactive proof is not informed by the structure of the prover but arises naturally from the structure of the problem.  This is in contrast to results in \cite{Magniez:2006:Self-testing-of} and \cite{Broadbent:2008:Universal-blind}, which essentially analyze the quantum provers in action to verify correct operation.  This may indicate that a different methodology from these partial results is necessary, or at least useful, in order to reduce the number of provers down to one.

\bibliographystyle{halphads}
\bibliography{Global_Bibliography}

\begin{thebibliography}{MMMO06}
 \providecommand{\doi}[1]{{\sc doi}:\href{http://dx.doi.org/#1}{#1}}
 \providecommand{\urlprefix}{{\sc url} }
 \providecommand{\eprintprefix}{{\sc eprint} }

\bibitem[Aar03]{Aaronson:2003:Quantum-Lower-B}
Scott Aaronson.
\newblock Quantum lower bound for recursive {Fourier} sampling.
\newblock {\em Quantum Information and Computation}, {\bf 3}:165, 2003.
\newblock
  \eprintprefix\href{http://arxiv.org/abs/quant-ph/0209060v2}{arXiv:quant-ph/0%
209060v2},
  \urlprefix\url{http://www.rintonpress.com/journals/qiconline.html#v3n2}.

\bibitem[ABOE08]{Aharonov:2008:Interactive-Pro}
Dorit Aharonov, Michael Ben-Or, and Elad Eban.
\newblock Interactive proofs for quantum computations, October 2008.
\newblock \eprintprefix\href{http://arxiv.org/abs/0810.5375}{arXiv:0810.5375}.

\bibitem[BFK09]{Broadbent:2008:Universal-blind}
Anne Broadbent, Joseph Fitzsimons, and Elham Kashefi.
\newblock Universal blind quantum computation.
\newblock In {\em 50th Annual IEEE Symposium on Foundations of Computer
  Science, FOCS 2009, October 25-27, 2009, Atlanta, Georgia, USA}, pp.
  517--526, Los Alamitos, CA, USA, 2009. IEEE Computer Society.
\newblock \doi{10.1109/FOCS.2009.36}.
\newblock \eprintprefix\href{http://arxiv.org/abs/0807.4154}{arXiv:0807.4154}.

\bibitem[BV97]{Bernstein:1997:Quantum-Complex}
Ethan Bernstein and Umesh Vazirani.
\newblock Quantum complexity theory.
\newblock {\em SIAM Journal on Computing}, {\bf 26}(5):1411--1473, 1997.
\newblock \doi{10.1137/S0097539796300921}.

\bibitem[HH08]{Hallgren:2008:Superpolynomial}
Sean Hallgren and Aram Harrow.
\newblock Superpolynomial speedups based on almost any quantum circuit.
\newblock In Luca Aceto, Ivan Damg{\aa}rd, Leslie Goldberg, Magn{\'u}s
  Halld{\'o}rsson, Anna Ing{\'o}lfsd{\'o}ttir, and Igor Walukiewicz, editors,
  {\em Automata, Languages and Programming}, {\em Lecture Notes in Computer
  Science}, volume 5125, pp. 782--795. Springer Berlin / Heidelberg, 2008.
\newblock \doi{10.1007/978-3-540-70575-8\_64}.
\newblock \eprintprefix\href{http://arxiv.org/abs/0805.0007}{arXiv:0805.0007}.

\bibitem[Joh08]{Johnson:2008:Upper-and-Lower}
Benjamin Johnson.
\newblock {\em Upper and Lower Bounds for Recursive Fourier Sampling}.
\newblock PhD thesis, University of California Berkeley, 2008.
\newblock
  \urlprefix\url{http://people.ischool.berkeley.edu/~johnsonb/Welcome_files/Be%
njamin_Thesis.pdf}.

\bibitem[MMMO06]{Magniez:2006:Self-testing-of}
Fr{\'e}d{\'e}ric Magniez, Dominic Mayers, Michele Mosca, and Harold Ollivier.
\newblock Self-testing of quantum circuits.
\newblock In M~et~al. Bugliesi, editor, {\em Proceedings of the 33rd
  International Colloquium on Automata, Languages and Programming}, number 4052
  in Lecture Notes in Computer Science, pp. 72--83, 2006.
\newblock \doi{10.1007/11786986\_8}.
\newblock
  \eprintprefix\href{http://arxiv.org/abs/quant-ph/0512111}{arXiv:quant-ph/051%
2111v1 }.

\bibitem[MY04]{Mayers:2004:Self-testing-qu}
Dominic Mayers and Andrew Yao.
\newblock Self testing quantum apparatus.
\newblock {\em Quantum Information and Computation}, {\bf 4}(4):273--286, July
  2004.
\newblock
  \eprintprefix\href{http://arxiv.org/abs/quant-ph/0307205}{arXiv:quant-ph/030%
7205}, \urlprefix\url{http://www.rintonpress.com/journals/qiconline.html#v4n4}.

\bibitem[Sha92]{Shamir:1992:IP--PSPACE}
Adi Shamir.
\newblock {IP = PSPACE}.
\newblock {\em Journal of the ACM}, {\bf 39}:869--877, October 1992.
\newblock \doi{10.1145/146585.146609}.

\end{thebibliography}
\end{document}